\newtheorem{defin}{Definition}
\newtheorem{prop}{Proposition}
\newtheorem{remark}{Remark}
\begin{document}

\title{Dynamic Security Assessment of Small-Signal Stability for Power Systems using Windowed Online Gaussian Process}

\author{Chao Zhai \thanks{Chao Zhai is with the School of Automation, China University of Geosciences, Wuhan 430074 China, and with Hubei Key Laboratory of Advanced Control and Intelligent Automation for Complex Systems, Wuhan 430074 China. Email: zhaichao@amss.ac.cn}}

\maketitle

\begin{abstract}
Due to the evolving nature of power grids and model uncertainty, the online stability assessment of electrical power systems is always a challenging problem. This paper aims to provide a theoretical framework for estimating the region of attraction for power systems in real time. By analyzing and learning the measurement data in a given time duration, a windowed online Gaussian process (GP) approach is developed to provide the real-time security assessment and quantify the uncertainty caused by measurement errors and dynamic evolution of power systems. In addition, the theoretical analysis is conducted to ensure the conference level of the estimated region of attraction. Finally, numerical simulations are implemented on a microgrid model with $9$ buses and $3$ generators to validate the proposed approach. The proposed online assessment approach contributes to improving the situational awareness of human operators, thereby taking remedial actions before the emergency.
\end{abstract}

\begin{IEEEkeywords}
Online Gaussian process, region of attraction, power systems stability, security assessment, time window
\end{IEEEkeywords}

\section{Introduction}
The real-time detection and monitoring of power system states are crucial to the situational awareness and emergency decision of operators. The online calculation of stability margin for power systems contributes to its responsive regulation and corrections against the disruptive  contingency. In practice, the operation of power system is always subject to external disturbances (e.g., load variations and fluctuation of power generation) and internal parameter variations (e.g., the change of branch impedance), which makes it difficult to construct an accurate power system model for the security assessment.

Security assessment of power systems largely relies on the quantification of stability margin and the estimation of possible risks. As a major type of stability for dynamic system, small-signal stability of operating points characterizes the local property of power systems
and it is normally associated with a particular equilibrium. The region of robust small-signal stability can be identified in the system state space via convex optimization \cite{hung16}. Dynamic security assessment (DSA) aims at the evaluation of stability margin according to the dynamic response of power systems subject to major contingencies (e.g., line outages, load shedding and generator tripping, etc). In practice, the dynamic security assessment focuses on the stability of a given operating state determined either from measurement tools (e.g., supervisory control and data acquisition, phasor measurement unit) or constructed in the framework of scenario analysis for planning or operation purposes \cite{huang02}. As mentioned before, the uncertainty of power systems precludes the accurate security assessment of power systems.

The GP approach provides a powerful mathematical tool to quantify the uncertainties of power systems by learning the sampling data with a priori assumption \cite{csat02}. One major obstacle in the application of GP approach is the non-Gaussianity of posterior process, which can be resolved by approximating this non-Gaussian posterior process using a Gaussian one \cite{see00}. The other obstacle results from the size of kernel matrix, which can be handled by the development of sparse approximation techniques \cite{smo00}. Thus, a sparse online GP approach has been proposed by integrating the sparse representation with an online algorithm \cite{csat02}.

Nevertheless, there is still a lack of effective online approaches for the dynamic security assessment of power system subject to major contingencies. For this reason, this work aims to propose a novel DSA scheme for power systems by integrating stability theory with the online sparse GP approach. Thus, this paper centers on the online security assessment of power systems with limited sampling data using the online GP approach.
Compared with existing work \cite{zhai19,mun13,ber16,jone17}, the key contributions of this work lie in
\begin{enumerate}
    \item Construct the region of attraction for a general differential-algebraic equation (DAE) system with the converse Lyapunov function.
    \item Develop an online GP approach for the dynamic security assessment of power systems with limited sampling data in a finite time window.
    \item Propose a security assessment scheme for practical power system with guaranteed confidence level in theory.
\end{enumerate}

The remainder of this paper is organized as follows. Section \ref{sec:roa} introduces the region of attraction for a general DAE system with the aid of the converse Lyapunov theorem. Section \ref{sec:gp} presents an online GP approach for learning the unknown Lyapunov function by using sampling data in a time window. Section \ref{sec:sas} elaborates on the security assessment scheme and presents theoretical results. Numerical simulations are conducted on a microgrid model to validate the proposed approach in Section \ref{sec:sim}. Finally, Section \ref{sec:con} draws a conclusion and discusses future work.

\section{The ROA of DAE System} \label{sec:roa}
Consider the differential-algebraic equation system as follows
\begin{equation}\label{dae}
\begin{split}
\dot{\mathbf{x}}&=\mathbf{f}(\mathbf{x},\mathbf{y}) \\
      \mathbf{0}&=\mathbf{g}(\mathbf{x},\mathbf{y})
\end{split}
\end{equation}
with $\mathbf{x}\in R^n$ and $\mathbf{y}\in R^m$. And the functions $\mathbf{f}: R^n\times R^m\rightarrow R^n$ and $\mathbf{g}: R^n\times R^m\rightarrow R^m$ are twice continuously differentiable in an open connected set $\Omega$. Suppose that there exists an equilibrium point $(\mathbf{x}^{*}, \mathbf{y}^{*})$ in the DAE system (\ref{dae}), and the partial derivative of $\mathbf{g}$ with respect to $\mathbf{y}$ has full rank on an open connected set that contains this equilibrium point \cite{per16}. This guarantees the existence and uniqueness of solutions to the DAE system (\ref{dae}) for any initial points $(\mathbf{x}_0, \mathbf{y}_0)$ that satisfies the algebraic equation $\mathbf{0}=\mathbf{g}(\mathbf{x}_0,\mathbf{y}_0)$ in the connected set \cite{hill90}. Contraction analysis can be employed to construct the ROA of nonlinear DAE systems \cite{nguyen2017contraction}.

For a specific operating point $(\mathbf{x}_0,\mathbf{y}_0)$, consider the region around this point. Let $\mathbf{x}=\mathbf{x}_0+\delta \mathbf{x}$ and $\mathbf{y}=\mathbf{y}_0+\delta \mathbf{y}$, and we have $\dot{\mathbf{x}}=\dot{\delta \mathbf{x}}+\dot{\mathbf{x}}_0$. Then $\mathbf{f}(\mathbf{x},\mathbf{y})$ and $\mathbf{g}(\mathbf{x},\mathbf{y})$ can be approximated by
\begin{equation*}
\begin{split}
\mathbf{f}(\mathbf{x},\mathbf{y})&=\mathbf{f}(\mathbf{x}_0+\delta \mathbf{x},\mathbf{y}_0+\delta \mathbf{y}) \\
&=\mathbf{f}(\mathbf{x}_0, \mathbf{y}_0)+\frac{\partial\mathbf{f}}{\partial\mathbf{x}}\delta \mathbf{x}+\frac{\partial\mathbf{f}}{\partial \mathbf{y}}\delta\mathbf{y}+O(\partial^2\mathbf{f})
\end{split}
\end{equation*}
and
\begin{equation*}
\begin{split}
\mathbf{g}(\mathbf{x},\mathbf{y})&=\mathbf{g}(\mathbf{x}_0+\delta \mathbf{x},\mathbf{y}_0+\delta \mathbf{y})\\
                        &=\mathbf{g}(\mathbf{x}_0, \mathbf{y}_0)+\frac{\partial\mathbf{g}}{\partial\mathbf{x}}\delta \mathbf{x}+\frac{\partial\mathbf{g}}{\partial \mathbf{y}}\delta\mathbf{y}+O(\partial^2\mathbf{g}),
\end{split}
\end{equation*}
where $O(\partial^2\mathbf{f})$ and $O(\partial^2\mathbf{g})$ denote the higher order terms of Taylor series for $\mathbf{f}(\mathbf{x},\mathbf{y})$ and $\mathbf{g}(\mathbf{x},\mathbf{y})$, respectively. By substituting $\dot{\mathbf{x}}_0=\mathbf{f}(\mathbf{x}_0,\mathbf{y}_0)$ and $\mathbf{0}=\mathbf{g}(\mathbf{x}_0,\mathbf{y}_0)$ into the above Taylor series and ignoring the higher order terms, one obtains
\begin{equation}\label{delta_f}
    \dot{\delta \mathbf{x}}=\frac{\partial\mathbf{f}}{\partial\mathbf{x}}\delta \mathbf{x}+\frac{\partial\mathbf{f}}{\partial \mathbf{y}}\delta\mathbf{y}
\end{equation}
and
\begin{equation}\label{delta_g}
    \mathbf{0}=\frac{\partial\mathbf{g}}{\partial\mathbf{x}}\delta \mathbf{x}+\frac{\partial\mathbf{g}}{\partial \mathbf{y}}\delta\mathbf{y}
\end{equation}
Assume that the matrix $\partial\mathbf{g}/\partial\mathbf{y}$ is invertible at the operating point $(\mathbf{x}_0,\mathbf{y}_0)$.
Note that this assumption is valid if power systems operate in the normal condition. By replacing $\delta \mathbf{y}$
in (\ref{delta_f}) with
$$
-\left(\frac{\partial\mathbf{g}}{\partial\mathbf{y}}\right)^{-1}\frac{\partial\mathbf{g}}{\partial\mathbf{x}}\delta \mathbf{x}
$$
from (\ref{delta_g}), one can get
\begin{equation*}
\begin{split}
    \dot{\delta \mathbf{x}}&=\frac{\partial\mathbf{f}}{\partial\mathbf{x}}\delta \mathbf{x}-\frac{\partial\mathbf{f}}{\partial \mathbf{y}}\left(\frac{\partial\mathbf{g}}{\partial\mathbf{y}}\right)^{-1}\frac{\partial\mathbf{g}}{\partial\mathbf{x}}\delta \mathbf{x} \\
    &=\left[\frac{\partial\mathbf{f}}{\partial\mathbf{x}}-\frac{\partial\mathbf{f}}{\partial \mathbf{y}}\left(\frac{\partial\mathbf{g}}{\partial\mathbf{y}}\right)^{-1}\frac{\partial\mathbf{g}}{\partial\mathbf{x}}\right]\delta \mathbf{x}
\end{split}
\end{equation*}
To simplify the mathematical expression, define
\begin{equation}\label{Amatrix}
\mathbf{A}(\mathbf{x},\mathbf{y})=\frac{\partial\mathbf{f}}{\partial\mathbf{x}}-\frac{\partial\mathbf{f}}{\partial \mathbf{y}}\left(\frac{\partial\mathbf{g}}{\partial\mathbf{y}}\right)^{-1}\frac{\partial\mathbf{g}}{\partial\mathbf{x}}
\end{equation}
If $\mathbf{A}(\mathbf{x},\mathbf{y})$ is a Hurwitz matrix at the operating point $(\mathbf{x}_0,\mathbf{y}_0)$, $\delta \mathbf{x}$ converges to $\mathbf{0}$ as time goes to the infinity \cite{asner70}. It follows from the non-singularity of the matrix $\partial\mathbf{g}/\partial\mathbf{y}$ in (\ref{delta_g}) that $\delta \mathbf{y}$ goes to $\mathbf{0}$ as well. This implies that the operating point $(\mathbf{x}_0,\mathbf{y}_0)$ is asymptotically stable, which enables us to estimate the ROA of power grids by constructing converse Lyapunov function \cite{zhai19}.

The regularity of the algebraic equation $\mathbf{0}=\mathbf{g}(\mathbf{x},\mathbf{y})$ is defined in order to ensure the existence and uniqueness of solutions to the DAE system (\ref{dae}) in a connected set \cite{per16}.
\begin{defin}
The algebraic equation $\mathbf{0}=\mathbf{g}(\mathbf{x},\mathbf{y})$ is regular if the Jacobian of $\mathbf{g}(\mathbf{x},\mathbf{y})$ with respect to $\mathbf{y}$ has the full rank on the connected set $\Omega$, that is $rank\left(\nabla_{\mathbf{y}}\mathbf{g}(\mathbf{x},\mathbf{y})\right)=m$, $\forall(\mathbf{x},\mathbf{y})\in\Omega$.
\end{defin}

The regularity of the algebraic equation $\mathbf{0}=\mathbf{g}(\mathbf{x},\mathbf{y})$ allows us to convert the DAE system (\ref{dae}) into an ordinary differential equation (ODE) system. Then the converse Lyapunov theorem can be employed to estimate the value of Lyapunov function without its analytic form \cite{kha96}. For the DAE system (\ref{dae}) that has a stable state trajectory $\mathbf{\phi}(\mathbf{x},t)$, $t\geq0$, a converse Lyapunov function can be constructed \cite{jone17}. Here the stable state trajectory is defined as the trajectory that converges to a stable equilibrium point as time goes to the infinity. The existence and construction of such converse Lyapunov function is presented as follows.
\begin{prop}\label{lya}
Without loss of generality, let the origin be an asymptotically stable equilibrium point for the DAE system (\ref{dae}), where $\mathbf{f}(\mathbf{x},\mathbf{y})$ is locally Lipschitz with respect to $\mathbf{x}$, and $\mathbf{0}=\mathbf{g}(\mathbf{x},\mathbf{y})$ is regular in the connected set $\Omega$ that contains the origin. $S$ is the region of attraction. Then there exist a continuous positive definite function $W(\mathbf{x})$ and a smooth, positive definite function $V(\mathbf{x})$ such that
$$
V(\mathbf{x})=\int_{0}^{\infty}\alpha(\|\mathbf{\phi}(\mathbf{x},t)\|)dt, \quad V(\mathbf{0})=0
$$
and
\begin{equation}\label{eq:pror1}
\begin{split}
\frac{dV(\mathbf{x})}{dt}&=\frac{\partial V(\mathbf{x})}{\partial \mathbf{x}}\mathbf{f}\left(\mathbf{x},\mathbf{Y}(\mathbf{x})\right)\leq -W(\mathbf{x}),~\forall~\mathbf{x}\in S
\end{split}
\end{equation}
with
\begin{equation} \label{eq:pror2}
\frac{d\mathbf{\phi}(\mathbf{x},t)}{dt}=\mathbf{f}\left(\mathbf{\phi}(\mathbf{x},t),\mathbf{Y}(\mathbf{\phi}(\mathbf{x},t))\right)
\end{equation}
and $\mathbf{\phi}(\mathbf{x},0)=\mathbf{x}$, where $\alpha(z)$ is a class $\Gamma$ function defined in Appendix \ref{app:def}.
$\mathbf{Y}(\mathbf{x})$ is the implicit function determined by the equation $\mathbf{0}=\mathbf{g}(\mathbf{x},\mathbf{y})$.
The level set with $c>0$ is given by
$$
\Omega_c=\{\mathbf{x}\in R^n~|~V(\mathbf{x})\leq c\},
$$
which is a compact subset of $S=\{\mathbf{x}\in R^n |\lim_{t\rightarrow{+\infty}}\mathbf{\phi}(\mathbf{x},t)=\mathbf{0}\}$.
\end{prop}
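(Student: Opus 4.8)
The plan is to reduce the DAE system to an ordinary differential equation on the region of attraction and then invoke the classical converse Lyapunov construction of Massera--Kurzweil. First I would exploit the regularity hypothesis: since $\nabla_{\mathbf{y}}\mathbf{g}(\mathbf{x},\mathbf{y})$ has full rank $m$ throughout $\Omega$ and $\mathbf{g}(\mathbf{0},\mathbf{0})=\mathbf{0}$, the implicit function theorem yields a unique continuously differentiable map $\mathbf{Y}(\mathbf{x})$ with $\mathbf{g}(\mathbf{x},\mathbf{Y}(\mathbf{x}))=\mathbf{0}$ and $\mathbf{Y}(\mathbf{0})=\mathbf{0}$, defined along every trajectory that stays in $\Omega$. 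Substituting $\mathbf{y}=\mathbf{Y}(\mathbf{x})$ turns (\ref{dae}) into the reduced ODE $\dot{\mathbf{x}}=\mathbf{F}(\mathbf{x}):=\mathbf{f}(\mathbf{x},\mathbf{Y}(\mathbf{x}))$, whose right-hand side is locally Lipschitz in $\mathbf{x}$ because $\mathbf{f}$ is locally Lipschitz and $\mathbf{Y}$ is $C^1$; its flow is precisely the $\mathbf{\phi}(\mathbf{x},t)$ appearing in (\ref{eq:pror2}). The asymptotic stability of the origin and the domain of attraction $S$ transfer verbatim from the DAE to this ODE, using the earlier observation that $\delta\mathbf{y}\to\mathbf{0}$ whenever $\delta\mathbf{x}\to\mathbf{0}$.

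Second, with a genuine ODE in hand I would set $V(\mathbf{x})=\int_0^\infty\alpha(\|\mathbf{\phi}(\mathbf{x},t)\|)\,dt$ and apply the converse Lyapunov theorem for regions of attraction (Kurzweil's theorem; see also \cite{kha96,jone17}). By Massera's lemma the class-$\Gamma$ function $\alpha$ can be chosen so that, combined with the (a priori only asymptotic) decay estimate for $\|\mathbf{\phi}(\mathbf{x},t)\|$ available on each compact subset of $S$, the integrand decays fast enough in $t$ for $V$ to be finite and locally uniformly convergent, while the lower bound $V(\mathbf{x})\ge\int_0^{T}\alpha(\|\mathbf{\phi}(\mathbf{x},t)\|)\,dt$ together with continuous dependence of $\mathbf{\phi}$ on $\mathbf{x}$ gives that $V$ is continuous and positive definite with $V(\mathbf{0})=0$.

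Third, the decrease condition follows from the semigroup property $\mathbf{\phi}(\mathbf{\phi}(\mathbf{x},s),t)=\mathbf{\phi}(\mathbf{x},s+t)$ and the substitution $\tau=s+t$:
\begin{equation*}
V(\mathbf{\phi}(\mathbf{x},s))=\int_0^\infty\alpha(\|\mathbf{\phi}(\mathbf{x},s+t)\|)\,dt=\int_s^\infty\alpha(\|\mathbf{\phi}(\mathbf{x},\tau)\|)\,d\tau,
\end{equation*}
and differentiating at $s=0$ yields $\dot{V}(\mathbf{x})=\frac{\partial V(\mathbf{x})}{\partial\mathbf{x}}\mathbf{f}(\mathbf{x},\mathbf{Y}(\mathbf{x}))=-\alpha(\|\mathbf{x}\|)$; taking $W(\mathbf{x}):=\alpha(\|\mathbf{x}\|)$, a continuous positive definite function, gives (\ref{eq:pror1}). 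For the level sets, choosing $\alpha$ so that $V$ is in addition proper relative to $S$ (i.e. $V(\mathbf{x}_k)\to\infty$ whenever $\mathbf{x}_k$ tends to $\partial S$ or $\|\mathbf{x}_k\|\to\infty$) makes each sublevel set $\Omega_c=\{\mathbf{x}\,|\,V(\mathbf{x})\le c\}$ closed, bounded and bounded away from $\partial S$, hence compact and contained in $S$; its forward invariance is immediate from $\dot V\le -W<0$ away from the origin.

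The step I expect to be the main obstacle is the smoothness claim --- that $V$ is not merely continuous but smooth. This needs the careful technical selection of $\alpha$ in Massera's lemma (or, alternatively, a smoothing argument in the spirit of the Lin--Sontag--Wang construction), together with the $C^1$ regularity of the implicit function $\mathbf{Y}$ so that differentiation under the integral sign is legitimate; everything else reduces to routine comparison-function bookkeeping.
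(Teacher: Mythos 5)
Your proposal follows essentially the same route as the paper: use the regularity of $\mathbf{g}$ and the implicit function theorem to reduce the DAE to the ODE $\dot{\mathbf{x}}=\mathbf{f}(\mathbf{x},\mathbf{Y}(\mathbf{x}))$, then invoke the converse Lyapunov theorem for the region of attraction to obtain $V$ and $W$. The only difference is one of detail: the paper delegates the entire second step to Theorem 4.17 of \cite{kha96} and Lemma 1 of \cite{zhai19}, whereas you unpack the standard Massera--Kurzweil construction (semigroup identity, choice of $\alpha$, properness for compact sublevel sets) and correctly flag smoothness of $V$ as the technical point that those cited results are carrying.
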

\begin{proof}
See Appendix \ref{app:lemma}.
\end{proof}

\begin{remark}
If the origin is not the stable equilibrium of the DAE system (\ref{dae}), the coordinate transformation $\mathbf{\bar{x}}=\mathbf{x}-\mathbf{x}^{*}$ and $\mathbf{\bar{y}}=\mathbf{y}-\mathbf{y}^{*}$ can be taken to obtain a new DAE system
\begin{equation}\label{dae_new}
\begin{split}
\dot{\mathbf{\bar}{\mathbf{x}}}&=\mathbf{f}(\mathbf{\bar{x}}+\mathbf{x}^{*},\mathbf{\bar{y}}+\mathbf{y}^{*}) \\
      \mathbf{0}&=\mathbf{g}(\mathbf{\bar{x}}+\mathbf{x}^{*},\mathbf{\bar{y}}+\mathbf{y}^{*})
\end{split}
\end{equation}
where $(\mathbf{x}^{*},\mathbf{y}^{*})$ denotes a stable equilibrium point of the DAE system (\ref{dae}). In this way, Proposition \ref{lya} can apply to the DAE system (\ref{dae_new}) with the origin being its stable equilibrium point.
\end{remark}
In Proposition \ref{lya}, the differential equation $\dot{\mathbf{x}} = \mathbf{f}(\mathbf{x},\mathbf{y})$ normally represents swing equation of generators in power systems \cite{kun94}. The algebraic equation $\mathbf{0}=\mathbf{g}(\mathbf{x},\mathbf{y})$ characterizes the power flow distribution. Inequality \eqref{eq:pror1} indicates that the Lyapunov function $V(\mathbf{x})$ decays over time, while Equation \eqref{eq:pror2} defines a stable trajectory $\mathbf{\phi}(\mathbf{x}, t)$ with the initial state $\mathbf{x}$. For the Lyapunov function $V(\mathbf{x})$ constructed in Proposition \ref{lya}, it is feasible to approach the real ROA by enlarging the level set.
This property may not hold for any analytical Lyapunov function. This advantage enables us to obtain a better ROA by collecting more sampling points in order to enlarge the level sets. In practice, the converse Lyapunov function $V(\mathbf{x})$ proposed in Proposition \ref{lya} can be estimated by \cite{zhai19}
\begin{equation}\label{Vest}
    \hat{V}(\mathbf{x})=\sum_{i=1}^{n}\alpha(\|\mathbf{\phi}(\mathbf{x},t_i)\|)\Delta t,
\end{equation}
where $\Delta t$ denotes the sampling time interval and $t_i=(i-1)\Delta t$, $i\in \{1,2,...,n\}$. While $\hat{V}(\mathbf{x})$ can be calculated directly using (\ref{Vest}) and the sampling data, the analytical Lyapunov function $V(\mathbf{x})$ is unknown. This work aims to learn this unknown Lyapunov function $V(\mathbf{x})$ online by capitalizing on the discrete sampling data and values of converse Lyapunov function in a time window. By treating $V(\mathbf{x})$ and $\hat{V}(\mathbf{x})$ as a GP and its measurement, respectively, the estimation error $V(\mathbf{x})-\hat{V}(\mathbf{x})$ can be regarded as the measurement noise. This enables us to learn the unknown Lyapunov function $V(\mathbf{x})$ using the online GP approach.

\section{The Windowed Online GP} \label{sec:gp}
\begin{figure*}\centering
\scalebox{0.15}[0.15]{\includegraphics{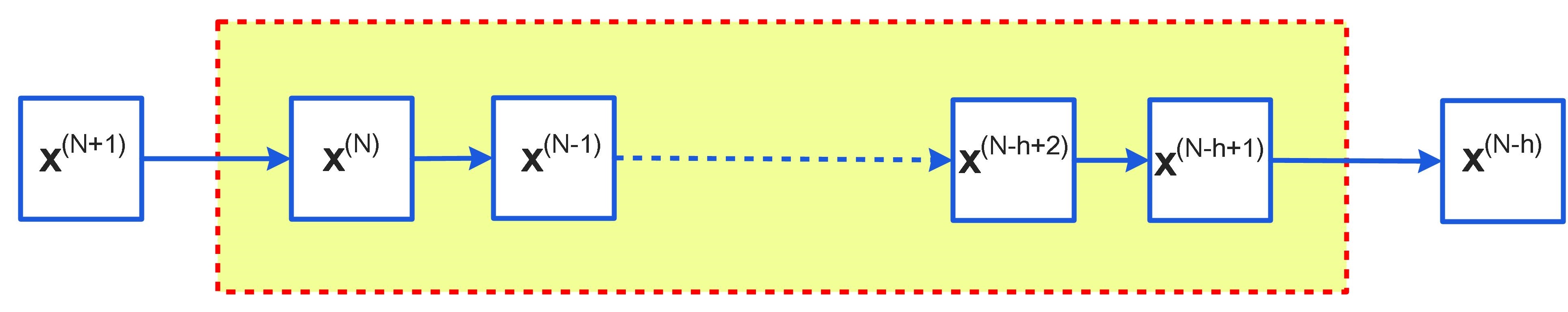}}
\caption{\label{winh} The time window with $h$ sampling points at the $N$-th sampling step. The red dashed rectangle refers to the time window, and it includes $h$ sampling points, which are denoted by the squares with blue boundaries. At each sampling step, the newest sampling point enters the time window, while the oldest one is removed from the window. For example, $\mathbf{x}^{(N)}$ enters the time window $W^N_h$ and $\mathbf{x}^{(N-h)}$ goes out of it at the $N$-th sampling step.}
\end{figure*}
The above section introduces the approximated Lyapunov function $\hat{V}(\mathbf{x})$ and the unknown Lyapunov function $V(\mathbf{x})$. In this section, we propose the windowed online GP approach for learning the unknown Lyapunov function.

\subsection{GP regression}
Normally, a general GP regression requires a prior distribution of unknown functions specified by a mean function, a covariance function, and the probability of the observations and sampling data to obtain the posterior distribution. Without loss of generality, we consider the unknown Lyapunov function $V(\mathbf{x})$ as a GP, which can be sequentially measured by $y^{(i)}=V(\mathbf{x}^{(i)})+\epsilon$, $i\in Z^{+}$, where $y^{(i)}$ refers to the observed function value for the input $\mathbf{x}^{(i)}$ at the $i$-th sampling step, and the measurement noise $\epsilon$ is zero-mean, independent and bounded by $\sigma$. With the GP approach, we can obtain the posterior distribution over $V(\mathbf{x})$ by using sampling data in the training set. By regarding the values of $V(\mathbf{x})$ as random variables, any finite collection of them is multivariate distributed in an overall consistent way. The unknown Lyapunov function $V(\mathbf{x})$ can be approximated by a GP. Note that the covariance or kernel function $k(\mathbf{x},\mathbf{x}')$ encodes the smoothness property of $V(\mathbf{x})$ from the GP. Essentially, the estimated value of Lyapunov function at a sampling point can be regarded as one observation of GP
$$
\hat{V}(\mathbf{x}^{(i)})=V(\mathbf{x}^{(i)})+\epsilon,  \quad i\in Z^{+}
$$
with $\epsilon\sim N(0,\sigma^2)$. Then the first $N$ observations form the vector $\hat{\mathbf{V}}_N=[\hat{V}(\mathbf{x}^{(1)}),...,\hat{V}(\mathbf{x}^{(N)})]^T$. There are the analytic formulas for mean $\mu^N(\mathbf{x})$, covariance $k^N(\mathbf{x},\mathbf{x}')$ and standard deviation $\sigma^N(\mathbf{x})$ of the posterior distribution
as follows \cite{sri12}
\begin{equation}\label{mu_sig}
    \begin{split}
        \mu^N(\mathbf{x})&=\mathbf{k}^N(\mathbf{x})^T(\mathbf{K}^N+\sigma^2\mathbf{I}_N)^{-1}\hat{\mathbf{V}}_N \\
       k^N(\mathbf{x},\mathbf{x}')&=k(\mathbf{x},\mathbf{x}')-\mathbf{k}^N(\mathbf{x})^T(\mathbf{K}^N+\sigma^2\mathbf{I}_N)^{-1}\mathbf{k}^N(\mathbf{x}') \\
       \sigma^N(\mathbf{x})&=\left[k^N(\mathbf{x},\mathbf{x})\right]^{\frac{1}{2}}
    \end{split}
\end{equation}
where $\mathbf{k}^N(\mathbf{x})=[k(\mathbf{x}^{(1)},\mathbf{x}),...,k(\mathbf{x}^{(N)},\mathbf{x})]^T$ and $\mathbf{K}^N$ is the positive definite kernel matrix $[k(\mathbf{x},\mathbf{x}')]$ with $\mathbf{x},\mathbf{x}'\in W_{\infty}^N$. And the set $W_{\infty}^N$ is given by
$W_{\infty}^N=\{\mathbf{x}^{(1)},\mathbf{x}^{(2)},...,\mathbf{x}^{(N)}\}$. Although it is convenient to adopt the formula (\ref{mu_sig}) for the inference, non-Gaussian of the posterior distribution and the size of matrix $\mathbf{K}^N$ preclude its direct application \cite{csat02}. In practice, the proper selection of the width of time window can resolve the problem of matrix size due to large datasets. The non-Gaussian of the posterior distribution can be handled by the approximation of Gaussian ones. Thus, an online algorithm is developed to integrate the windowed GP with the approximation of non-Gaussian posterior process.

\subsection{Windowed online GP}
By approximating the posterior in the sense of the Kullback-Leibler divergence, the online GP is given by \cite{csat02}
\begin{equation}\label{online_GP}
\begin{split}
\mu^{N}(\mathbf{x})&=\mu^{N-1}(\mathbf{x})+q^{N}\cdot k^{N-1}(\mathbf{x},\mathbf{x}^{(N)}) \\
k^{N}(\mathbf{x},\mathbf{x}')&=k^{N-1}(\mathbf{x},\mathbf{x}')+r^{N}\cdot k^{N-1}(\mathbf{x},\mathbf{x}^{(N)})k^{N-1}(\mathbf{x}^{(N)},\mathbf{x}') \\
\sigma^{N}(\mathbf{x})&=\left[k^{N}(\mathbf{x},\mathbf{x})\right]^{\frac{1}{2}}
\end{split}
\end{equation}
where $q^{N}$ and $r^{N}$ are updated as follows
\begin{equation}
\begin{split}
q^{N}&=\frac{\partial}{\partial E[V^{N}]_{N-1}}\ln{E[p(\hat{V}^{N}|V^{N})]_{N-1}} \\
r^{N}&=\frac{\partial^2}{\partial E[V^{N}]^2_{N-1}}\ln{E[p(\hat{V}^{N}|V^{N})]_{N-1}}
\end{split}
\end{equation}
with $V^{N}=V(\mathbf{x}^{(N)})$ and $\hat{V}^{N}=\hat{V}(\mathbf{x}^{(N)})$. Due to limited resources of computation and the evolution of power systems, it is necessary to remove the old sampling points and add the latest ones into the sampling set when the number of sampling points is larger than a given threshold. Thus, a time window is introduced to include the latest $h$ sampling points for GP learning.
In order to allow for the evolution of power systems and relieve the computational burdens, a time window is constructed to take into account the latest $h$ sampling points for the GP learning (see Fig. \ref{winh}). Thus, a set of sampling data is defined as
$$
W^N_h=\left\{\mathbf{x}^{(N-h+1)},\mathbf{x}^{(N-h+2)},...,\mathbf{x}^{(N-1)},\mathbf{x}^{(N)}\right\}
$$
to include the latest $h$ sampling points at the $N$-th sampling step. By unfolding the recursion steps in (\ref{online_GP}), the parametrization of approximate posterior GP can be obtained as follow
\begin{equation}\label{par_GP}
\begin{split}
\mu_h^{N}(\mathbf{x})&=(\mathbf{\alpha}^{N})^T\mathbf{k}_h^{N}(\mathbf{x}) \\
k_h^{N}(\mathbf{x},\mathbf{x}')&=k(\mathbf{x},\mathbf{x}')+\sum_{\mathbf{x}^{(i)},\mathbf{x}^{(j)}\in W_h^{N}}C^{N}_{ij}\cdot k(\mathbf{x},\mathbf{x}^{(i)})k(\mathbf{x}^{(j)},\mathbf{x}') \\
\sigma_h^{N}(\mathbf{x})&=\left[k_h^{N}(\mathbf{x},\mathbf{x})\right]^{\frac{1}{2}}
\end{split}
\end{equation}
where $\mathbf{k}_h^{N}(\mathbf{x})=[k(\mathbf{x}^{(N-h+1)},\mathbf{x}),...,k(\mathbf{x}^{(N)},\mathbf{x})]^T$.
Here $W_h^{N}$ denotes the set of sampling points in the time window at the $N$-th iteration. Let $\mathbf{K}_h^{N}=\{k(\mathbf{x},\mathbf{x}')\}\in R^{h\times h}$ represent the kernel matrix with $\mathbf{x},\mathbf{x}'\in W_h^{N}$.
Then an operator ${R}$ is introduced to update $\mathbf{K}_h^{N}$ as follows
\begin{equation}\label{Kmatrix}
\mathbf{K}_h^{N}={R}(\mathbf{K}_h^{N-1})+[\mathbf{0}_{h\times(h-1)},T(\mathbf{k}_h^{N})]+[\mathbf{0}_{h\times (h-1)},T(\mathbf{k}_h^{N})]^T
\end{equation}
with
$$
\mathbf{k}_h^{N}=[k(\mathbf{x}^{(N-h+1)},\mathbf{x}^{(N)}),...,k(\mathbf{x}^{(N-1)},\mathbf{x}^{(N)})]^T.
$$
The definition of the operator $R$ is presented in Appendix \ref{operator}, and the coefficients $\mathbf{\alpha}^{N}$ and $\mathbf{C}^N=\{C^N_{ij}\}$ with $\mathbf{x}^{(i)},\mathbf{x}^{(j)}\in W_h^{N}$ are obtained according to the updating rule:
\begin{equation}\label{par_update}
\begin{split}
\mathbf{k}_h^{i}&=\Lambda(\mathbf{K}_{h}^{N}\mathbf{e}^i_h) \\
\mathbf{s}^{i}&={T}(\mathbf{C}^{i-1}\mathbf{k}_h^{i})+\mathbf{e}^{i+h-N} \\
\mathbf{\alpha}^{i}&={T}(\mathbf{\alpha}^{i-1})+q^{i}\cdot\mathbf{s}^{i} \\
\mathbf{C}^{i}&={U}(\mathbf{C}^{i-1})+r^{i}\cdot\mathbf{s}^{i}(\mathbf{s}^{i})^T
\end{split}
\end{equation}
where $\mathbf{k}_h^{i}=[k(\mathbf{x}^{(i-h+1)},\mathbf{x}^{(i)}),...,k(\mathbf{x}^{(i-1)},\mathbf{x}^{(i)})]^T$ and the iteration number $i$ increases sequentially from $(N-h+1)$ to $N$. Moreover, $\mathbf{e}^i_h$ represents a $h$ dimensional unit vector with the $i$-th element being $1$. And $\Lambda$ is an operator that can construct a vector by extracting the first $(i-1)$ elements from a given vector. ${T}$ and ${U}$ are two operators that extend the vector and matrix by one dimension, respectively. Specifically, ${T}$ adds zero at the end of the vector, and ${U}$ appends zeros to the last row and column of the matrix. In addition, $\mathbf{e}^{i+h-N}$ refers to the $(i+h-N)$-th unit vector.

\section{Security Assessment Scheme}\label{sec:sas}

\begin{figure*}\centering
\scalebox{0.075}[0.075]{\includegraphics{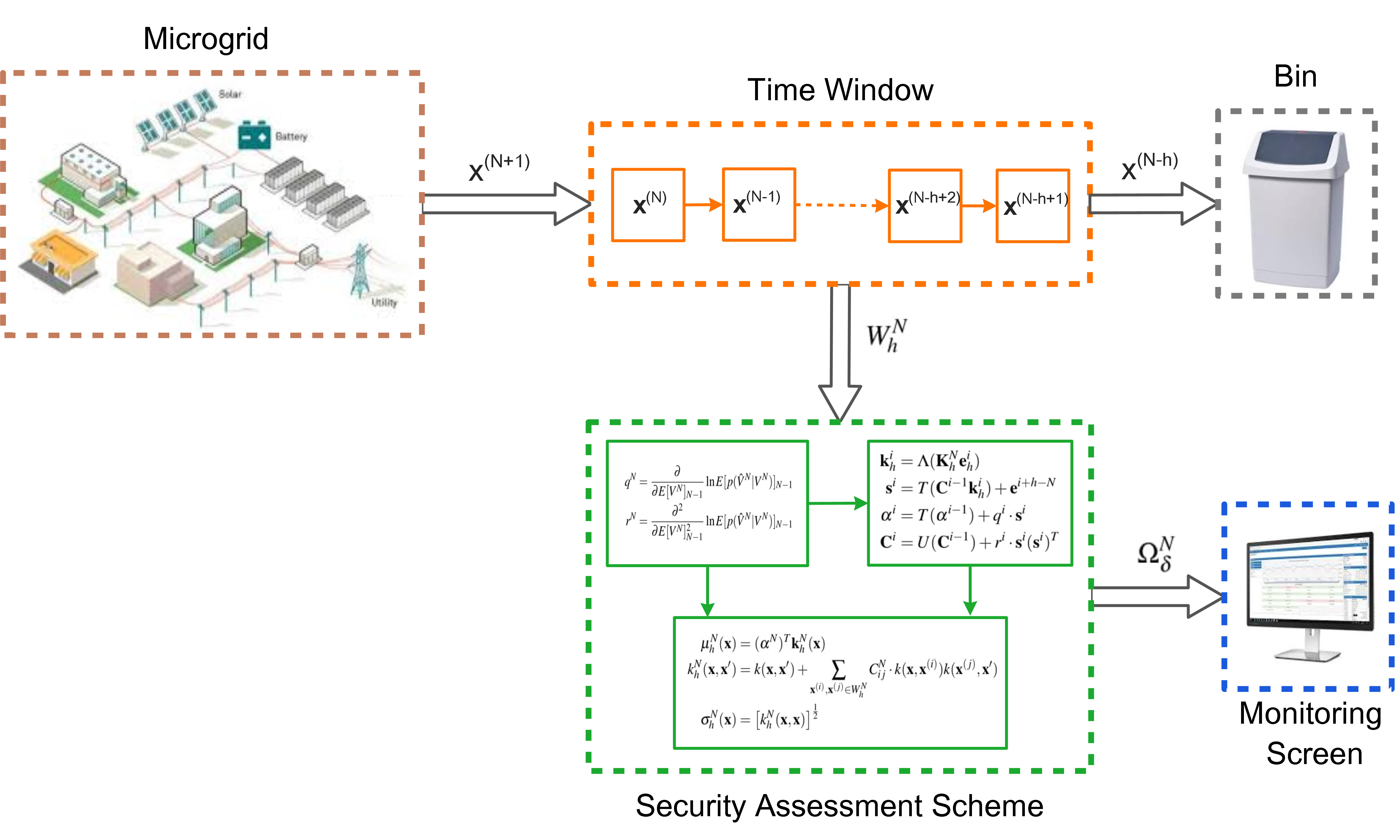}}
\caption{\label{flow} The information flow of the online security assessment in a microgrid.}
\end{figure*}

For a given value of $\delta \in (0,1)$ and the sampling domain $X\in R^n$, which is a subset of the operating state space, our goal is to estimate the region of attraction online, wherein each point converges to the origin with the probability of $\delta$ at least.
Thus, a security assessment scheme is developed in Table \ref{tab:sta} to select the sampling points for enlarging the ROA with a guaranteed confidence level. Specifically, a sampling point $\mathbf{x}^{(N)}$ is selected in $X$ at the $N$-th sampling step by searching for the maxima of $\mu_h^{N-1}(\mathbf{x})+\beta_{\delta}\cdot\sigma_h^{N-1}(\mathbf{x})$, where the term $\mu_h^{N-1}(\mathbf{x})$ helps to enlarge the level set of Lyapunov function and the term $\sigma_h^{N-1}(\mathbf{x})$ allows to reduce the uncertainty of sampling region.

Essentially, the sampling rule aims to reconcile the trade-off between the exploitation for enlarging the ROA and the exploration for reducing the uncertainty of sampling region. Let $\mathbf{x}^{(N)}$ serve as the sampling point of the DAE system (\ref{dae}), and it enables us to generate a state trajectory $\mathbf{\phi}(\mathbf{x}^{(N)},t)$, $t\geq0$. If this state trajectory can converge to the origin, the point $\mathbf{x}^{(N)}$ is called as a stable sampling point. Then the value of Lyapunov function at $\mathbf{x}^{(N)}$ is estimated by $\hat{V}(\mathbf{x}^{(N)})$ with (\ref{Vest}). The time window $W_h^N$ is updated by removing an old sampling point $x^{(N-h)}$ and including a new one $\mathbf{x}^N$. By choosing $\{(\mathbf{x}^{(N-h+1)},\hat{V}(\mathbf{x}^{(N-h+1)})),...,(\mathbf{x}^{(N)},\hat{V}(\mathbf{x}^{(N)}))\}$ as the training set, $\mu_h^N(\mathbf{x})$ and $\sigma_h^N(\mathbf{x})$ for the unknown Lyapunov function $V(\mathbf{x})$ can be updated according to (\ref{par_GP}). Note that the initial training set is assigned as $\{(\mathbf{0},0),(\mathbf{0},0),...,(\mathbf{0},0)\}$. If the state trajectory $\mathbf{\phi}(\mathbf{x}^{(N)},t)$, $t\geq0$ fails to converge to the origin, the sampling point $\mathbf{x}^{(N)}$ will be reselected according to the sampling rule. Finally, the region of attraction can be estimated online by constructing $\Omega^N_{\delta}$ with (\ref{roa}).

\begin{table}
\caption{\label{tab:sta} Security Assessment Scheme}
\begin{center}
\begin{tabular}{lcl} \hline
 \textbf{Input:}  $X\in R^n$, $\delta$, $\xi$, $t_n$, $h$, $\mu_0$, $\sigma_0$, $k(\mathbf{x},\mathbf{x}')$, $N=1$ \\
 \textbf{Output:} $W_h^N$, $\mu_h^N(\mathbf{x})$, $\sigma_h^N(\mathbf{x})$, $\Omega_{\delta}^N$ \\ \hline
  1: ~~\textbf{while} $(1)$ \\
  2:~~~~~~~ Choose $\mathbf{x}^{(N)}=\arg\max_{\mathbf{x}\in X}\left[\mu_h^{N-1}(\mathbf{x})+\beta_{\delta}\cdot\sigma_h^{N-1}(\mathbf{x})\right]$ \\
  3:~~~~~~~ Generate $\mathbf{\phi}(\mathbf{x}^{(N)},t)$, $t\geq0$ with (\ref{dae}) \\
  4:~~~~~~~ \textbf{if} ($\|\mathbf{\phi}(\mathbf{x}^{(N)},t_n)\|<\xi$) \\
  5:~~~~~~~~~~~ Sample $\hat{V}(\mathbf{x}^{(N)})=V(\mathbf{x}^{(N)})+\epsilon$ with (\ref{Vest}) \\
  6:~~~~~~~~~~~ Update $W_h^N=W_h^{N-1}\cup\{\mathbf{x}^{(N)}\}\setminus\{\mathbf{x}^{(N-h)}\}$ \\
  7:~~~~~~~~~~~ Compute $\mathbf{\alpha}^N$ and $\mathbf{C}^N$ with (\ref{Kmatrix}) and (\ref{par_update}) \\
  8:~~~~~~~~~~~ Update $\mu_h^N(\mathbf{x})$ and $\sigma_h^N(\mathbf{x})$ with  (\ref{par_GP}) \\
  9:~~~~~~~ \textbf{else} \\
  10:~~~~~~~~~~ Go to Step $2$ \\
  11:~~~~~~ \textbf{end if} \\
  12:~~~~~~~~~~ Construct $\Omega^N_{\delta}$ with (\ref{roa})\\
  13:~~~~~~~~~~ Update $N=N+1$ \\
  14: ~\textbf{end while} \\ \hline
\end{tabular}
\end{center}
\end{table}

If a certified ROA is available, it is sufficient to judge a stable sampling point if the corresponding state trajectory can enter this ROA.
Each state in the ROA is guaranteed to approach the stable equilibrium point as time goes to the infinity. This can reduce the computation time for determining the stable sampling points. Next, we present theoretical results on the construction and evaluation of ROA with a given confidence level by using the Security Assessment Scheme in Table \ref{tab:sta}.

\begin{prop}\label{prop_est}
Let $\delta\in(0,1)$ and $\hat{V}^N_{\max}=\max_{\mathbf{x}^{(i)}\in W_h^N}\hat{V}(\mathbf{x}^{(i)})$. Then the region of attraction of DAE system (\ref{dae}) at the $N$-th sampling step is given by
\begin{equation}\label{roa}
\Omega^N_{\delta}=\left\{\mathbf{x}\in R^n|\mu^N_h(\mathbf{x})+\beta_{\delta}\cdot\sigma^N_h(\mathbf{x})\leq\hat{V}^N_{\max}\right\}
\end{equation}
with the probability of $\delta$ and $\beta_{\delta}=\Phi^{-1}(\frac{1+\delta}{2})$, where $\Phi^{-1}$ is the inverse cumulative distribution function of the standard normal distribution.
\end{prop}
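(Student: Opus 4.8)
The plan is to combine the level-set characterization of the region of attraction in Proposition~\ref{lya} with the Gaussian form of the windowed posterior. First I would fix an arbitrary point $\mathbf{x}\in\Omega^N_\delta$ and use the GP model: the posterior value $V(\mathbf{x})$ is Gaussian with mean $\mu^N_h(\mathbf{x})$ and variance $(\sigma^N_h(\mathbf{x}))^2$, so the standardized variable $Z=(V(\mathbf{x})-\mu^N_h(\mathbf{x}))/\sigma^N_h(\mathbf{x})$ is standard normal (the case $\sigma^N_h(\mathbf{x})=0$ is trivial, since then $V(\mathbf{x})=\mu^N_h(\mathbf{x})$ deterministically). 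Since $\delta\in(0,1)$ gives $\tfrac{1+\delta}{2}\in(\tfrac12,1)$ and hence $\beta_\delta=\Phi^{-1}(\tfrac{1+\delta}{2})>0$, the symmetric event $|V(\mathbf{x})-\mu^N_h(\mathbf{x})|\le\beta_\delta\sigma^N_h(\mathbf{x})$ has probability $2\Phi(\beta_\delta)-1=\delta$, and in particular $\Pr\{V(\mathbf{x})\le\mu^N_h(\mathbf{x})+\beta_\delta\sigma^N_h(\mathbf{x})\}=\Phi(\beta_\delta)=\tfrac{1+\delta}{2}\ge\delta$. By the definition of $\Omega^N_\delta$ in \eqref{roa}, $\mu^N_h(\mathbf{x})+\beta_\delta\sigma^N_h(\mathbf{x})\le\hat V^N_{\max}$, so with probability at least $\delta$ we obtain $V(\mathbf{x})\le\hat V^N_{\max}$.

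Next I would invoke Proposition~\ref{lya} with $c=\hat V^N_{\max}$. Every point retained in the window $W^N_h$ is a stable sampling point produced by Step~4 of the scheme in Table~\ref{tab:sta}, and $\hat V$ in \eqref{Vest} is nonnegative and strictly positive away from the origin; hence for $N\ge1$ we have $c=\hat V^N_{\max}>0$ once a nonzero stable point has been retained, and Proposition~\ref{lya} asserts that $\Omega_c=\{\mathbf{x}\in R^n\mid V(\mathbf{x})\le c\}$ is a compact subset of $S=\{\mathbf{x}\in R^n\mid\lim_{t\to\infty}\mathbf{\phi}(\mathbf{x},t)=\mathbf{0}\}$. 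Combining this with the first step: for each fixed $\mathbf{x}\in\Omega^N_\delta$, with probability at least $\delta$ we have $V(\mathbf{x})\le c$, i.e.\ $\mathbf{x}\in\Omega_c\subseteq S$, so the trajectory $\mathbf{\phi}(\mathbf{x},t)$ converges to the origin as $t\to\infty$. This is precisely the assertion that $\Omega^N_\delta$ is a region of attraction of \eqref{dae} at the $N$-th sampling step with confidence $\delta$.

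I expect the main things to get right to be conceptual rather than computational. The first is the scope of the probabilistic claim: the normal-quantile argument is pointwise, so what is proved (and what is claimed in Section~\ref{sec:sas}) is that each \emph{given} $\mathbf{x}\in\Omega^N_\delta$ lies in the true ROA with probability at least $\delta$, not a uniform guarantee over the whole set, which would require a union bound or a supremum-type concentration inequality over a continuum. The second is the modeling status of the posterior: the moments $\mu^N_h,\sigma^N_h$ come from the windowed online recursion \eqref{par_GP} rather than from the exact posterior \eqref{mu_sig}, so the argument rests on the standing assumption that $V(\mathbf{x})$ conditioned on the windowed data is (approximately) Gaussian with those moments, which is consistent with the GP prior and the zero-mean noise $\epsilon\sim N(0,\sigma^2)$. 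The remaining bookkeeping — that the noisy samples enter only through the scalar threshold $\hat V^N_{\max}$ and that the retained sampling points are themselves stable and hence in $S$ — is routine, and everything else is a direct application of Proposition~\ref{lya} and the identity $2\Phi(\beta_\delta)-1=\delta$.
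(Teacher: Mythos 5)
Your proposal is correct and follows essentially the same route as the paper's own proof in Appendix D: standardize the Gaussian posterior $V(\mathbf{x})$, use the identity $2\Phi(\beta_{\delta})-1=\delta$ to bound the deviation from $\mu^N_h(\mathbf{x})$, and then invoke the level-set containment $\Omega_c\subseteq S$ from Proposition~\ref{lya} with $c=\hat{V}^N_{\max}$. Your one-sided observation that $\Pr\{V(\mathbf{x})\leq\mu^N_h(\mathbf{x})+\beta_{\delta}\sigma^N_h(\mathbf{x})\}=\tfrac{1+\delta}{2}\geq\delta$ is a mild sharpening of the paper's two-sided argument, and your explicit caveats about the pointwise (rather than uniform) nature of the guarantee and the approximate Gaussianity of the windowed posterior are accurate readings of what the paper actually establishes.
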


\begin{proof}
See Appendix \ref{prop2}.
\end{proof}

Proposition \ref{prop_est} allows to estimate the ROA of the DAE system (\ref{dae}) with a certain probability. If a certified ROA can be obtained, the proposed scheme can be employed to assess the stability of sampling region outside the certified ROA. Besides the ROA, the proposed scheme is applied to a more general concept of security for power systems. For example, it is feasible to estimate the invariant set of the DAE system (\ref{dae}) for the online security assessment.

\section{Numerical Simulations}\label{sec:sim}

This section presents simulation results using the Security Assessment Scheme in Table \ref{tab:sta} for a micro grid with $9$ buses and $3$ generators. Figure \ref{flow} illustrates the flow of information and data for the online security assessment of a micro grid. First of all, the measurement data are collected from a micro grid using the measurement devices (e.g., phasor measurement unit). Then the sampling data are sent to a time window $W_h^N$ that can accommodate the latest $h$ sampling points, and the out-of-date sampling points are deleted from the time window in order to reflect the system evolution and reduce the computational burden. The $h$ sampling points in the time window are used to train the GP model and implement the security assessment scheme, which can provide the real-time estimation of ROA for the micro grid.

First of all, a mathematical model of microgrid is introduced as follows.

\subsection{Microgrid model}
Consider a $M$-bus microgrid described by a set of differential-algebraic equations as follows \cite{per16}
\begin{equation}\label{eq:dif}
\begin{split}
\dot{\theta}_i    ~~&=~~ \omega_i \\
\dot{\omega}_i ~~&=~~ -(\omega_i-\omega^{*})-K_{P,i}(P_i-P^{*}_i)+u_{P,i} \\
\dot{U}_i  ~~&=~~ -(U_i-U_i^{*})-K_{Q,i}(Q_i-Q^{*}_i), \quad i\in \mathcal{V}_I \\
\end{split}
\end{equation}
and
\begin{equation}\label{eq:alg}
\begin{split}
    0 ~~&=~~ P_i-P^{*}_i \\
    0 ~~&=~~ Q_i-Q^{*}_i, \quad i\in \mathcal{V}_L
\end{split}
\end{equation}
where $\omega^*$ denotes the nominal frequency. $P_i$ and $Q_i$ are the active power and reactive power on node $i$, respectively.
The parameters $K_{P,i}$ and $K_{Q,i}$ are positive constants, and the term $u_{P,i}$ represents a secondary
control input. In addition, $P^{*}_i$, $Q^{*}_i$ and $U_i^{*}$ are the corresponding set points for the variables $P_i$, $Q_i$ and
$U_i$, respectively. The net active power and net reactive power injected at node $i$, $i\in \mathcal{V}=\mathcal{V}_L\cup \mathcal{V}_I$ are given by
$$
P_i=\sum_{j=1}^{M}B_{ij}|U_i||U_j|\sin(\theta_{ij})
$$
and
$$
Q_i=-\sum_{j=1}^{M}B_{ij}|U_i||U_j|\cos(\theta_{ij})
$$
with the assumption that branch admittances are purely inductive. It is demonstrated that the equilibrium determined by set points is asymptotically stable using the following integral control law \cite{per16}
\begin{equation*}
\begin{split}
      \mathbf{u_P}&=\mathbf{\zeta}  \\
\dot{\mathbf{\zeta}}&=-\mathbf{L}\mathbf{\zeta}-\mathbf{K_P}^{-1}(\mathbf{\omega_I}-\mathbf{\omega^{*}_I}),
\end{split}
\end{equation*}
where $\mathbf{K_P}^{-1}=\texttt{diag}(K_{P,i}^{-1})$ is a diagonal matrix with the $i$-th diagonal element $K_{P,i}^{-1}$.
The vector $\mathbf{u_P}=(u_{P,1},u_{P,2},...,u_{P,|\mathcal{V}_I|})^T$ denotes the control input on each generator bus, and $\mathbf{L}$ refers to the Laplacian matrix of a graph related to the power network.


\subsection{Simulation results}
Figure \ref{resp} presents the time response of system states after suffering from the initial disturbances (e.g., branch outage) using the microgrid model. It is observed that the disturbance occurs at $t=1$s, which leads to the fluctuations of power system states (e.g., frequency, phase angle and voltage magnitude). And these power system states converge due to feedback control. Such response trajectories enables us to estimate the value of Lyapunov function for a certain initial state using (\ref{Vest}). The initial state and its corresponding value of Lyapunov function form one element in the training set, and the online GP algorithm is implemented to learn the Lyapunov function in a given time window with the width $h=100$. Note that the spherical radial basis function (RBF) kernels are adopted with the Gaussian likelihood for the GP learning. Figure \ref{mul_roa} shows the estimation of ROA for the frequency of IEEE $9$ Bus System in the four sequential sampling steps. The origin refers to the desired frequency of microgrid, and the states in the yellow region are guaranteed to converge to the origin with the probability larger than 90\%. It is demonstrated that the shape of yellow region changes due to the update of sampling data in the training set, which reflects the evolution of stability margin. By comparing the current operating state and the estimated ROA (e.g., the yellow region), it is convenient for human operators to determine the stability level of power systems and take remedial actions in time during the emergency.


\subsection{Discussions on computational cost}

\begin{figure}\centering
 {\includegraphics[width=0.52\textwidth]{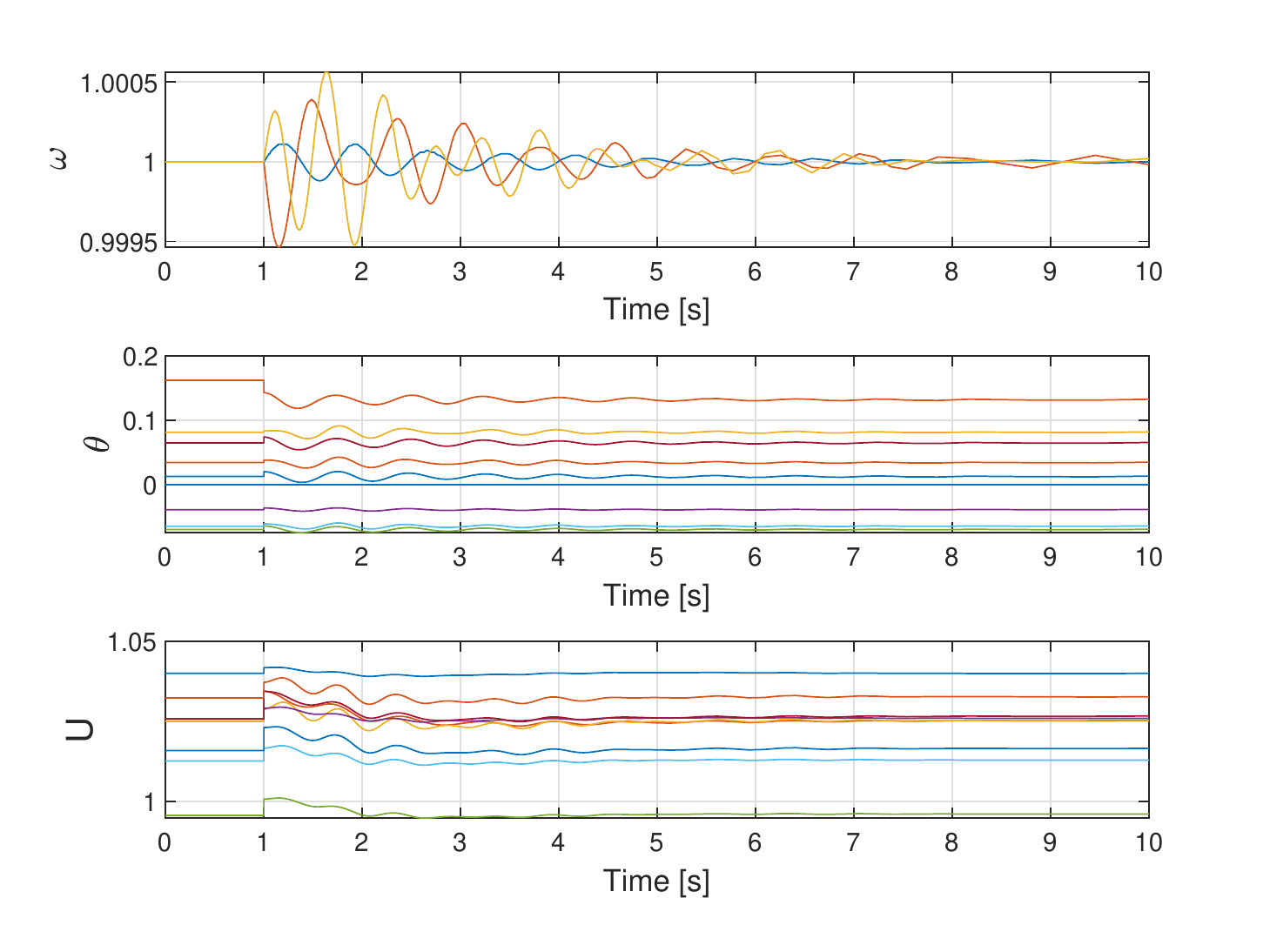}}
 \caption{\label{resp} Time response of power system states $\omega$, $\theta$ and $U$ due to the disturbance on Branch $2$ at time $t=1$s.}
\end{figure}

\begin{figure}\centering
 {\includegraphics[width=0.52\textwidth]{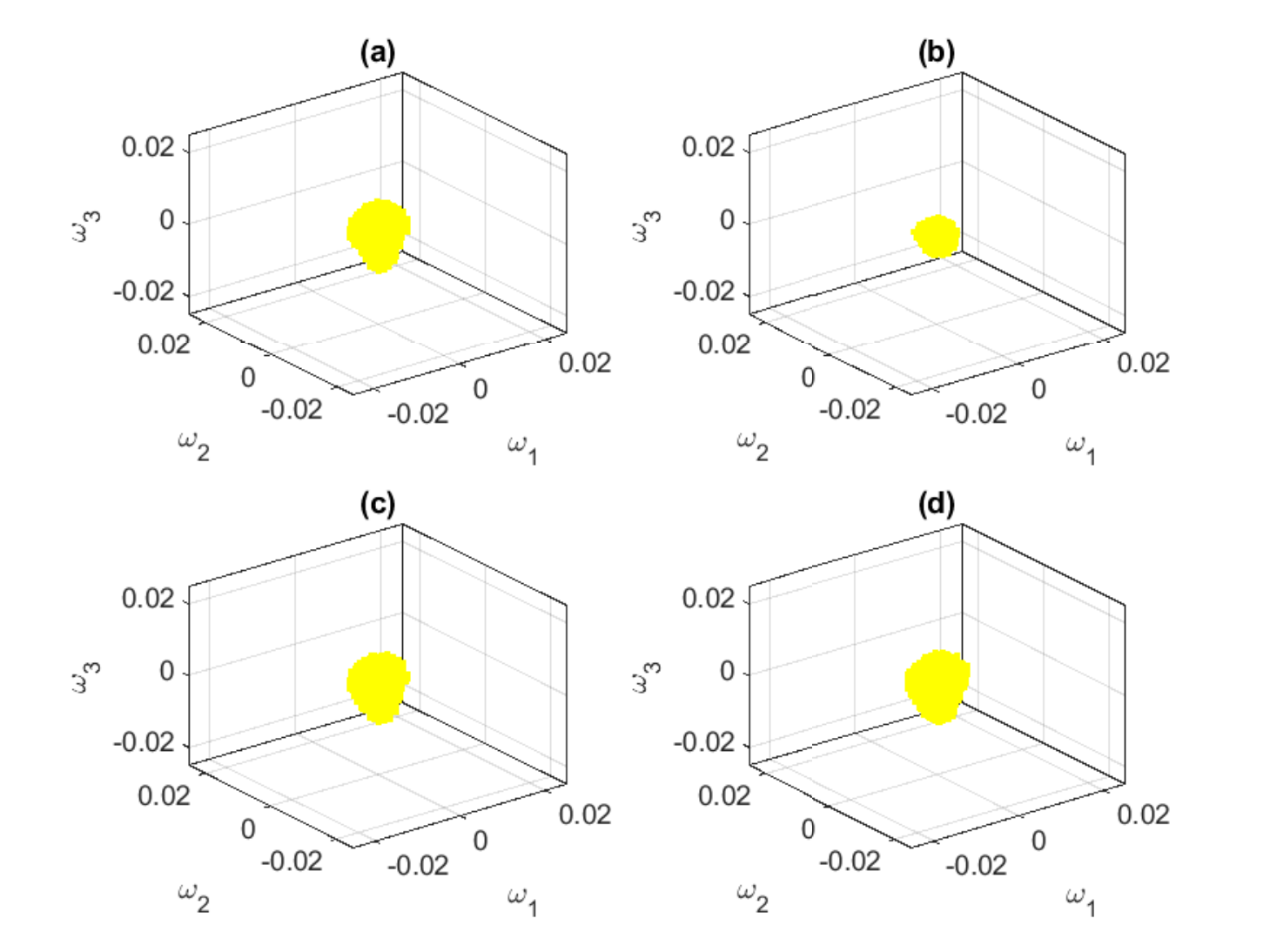}}
 \caption{\label{mul_roa} ROA of power system frequencies in the four sequential sampling steps with the confidence level of 90\%.}
\end{figure}

The computational cost associated with Security Assessment Scheme in Table \ref{tab:sta} mainly results from two factors: the generation of state trajectories and the implementation of online GP algorithm as presented in \eqref{par_GP}, \eqref{Kmatrix} and \eqref{par_update}. Essentially, the former depends on the structure and dimension of DAE system. In fact, the computation burden mainly depends on the number of sampling points $h$ in the time window rather than the dimension of power system dynamics. With Matlab 2017b in the laptop, it takes around $5$ seconds for IEEE $9$-bus system in the simulations. For the large-size sampling data, the sparse representation of GP model can be adopted to overcome the size limitation by reconstructing a sub-sample of the whole sampling data \cite{liu18}.

\section{Conclusions and Future Work}\label{sec:con}
In this paper, we developed an online GP approach to assess the operating states of power systems by learning the sampling data in a finite time window. Numerical simulations were conducted to validate the proposed approach in a microgrid model. The proposed approach enables us to monitor the operation of power grids in real time and provide the timely situational awareness for operators. Future work may include the filtering of redundant sampling data in the time window and the integration of real-time protection schemes for enhancing the resilience of power systems.


\section*{Appendix}

\subsection{The class $\Gamma$ function} \label{app:def}
\begin{defin}\label{def}
The class $\Gamma$ function consists of all continuous functions $\alpha: [0,a)\rightarrow{[0,\infty]}$ which satisfy the following conditions \cite{zhai19}:
\begin{enumerate}
  \item $\forall z>0$, $\alpha(z)\in C^2$.
  \item $\forall z>y\geq0$, $\alpha(z)>\alpha(y)$ and $\alpha(0)=0$.
  \item $\forall z\geq0$, $\exists~m>0$, such that $\alpha(z)\leq z^m$.
\end{enumerate}
\end{defin}

\subsection{Proof of Proposition \ref{lya}}\label{app:lemma}
It follows from the implicit function theorem and the regularity of $\mathbf{g}(\mathbf{x},\mathbf{y})$ with respect to $\mathbf{y}$ that there is a neighborhood $U_{\Omega}\in R^n$ and a unique twice differentiable function $\mathbf{Y}: R^n\rightarrow R^m$ such that $\mathbf{0}=\mathbf{g}(\mathbf{x},\mathbf{Y}(\mathbf{x}))$, $\mathbf{x}\in U_{\Omega}$ \cite{hill90}. Thus, the DAE system (\ref{dae}) reduces to the ODE system $\dot{\mathbf{x}}=\mathbf{f}(\mathbf{x}, \mathbf{Y}(\mathbf{x}))$, $\mathbf{x}\in U_{\Omega}$. Since the origin
is a stable equilibrium point of the DAE system (\ref{dae}), $\mathbf{0}$ is a stable equilibrium point of the ODE $\dot{\Delta \mathbf{x}}=\mathbf{f}\left(\mathbf{x},\mathbf{Y}(\mathbf{x})\right)$. It follows from Theorem $4.17$ in \cite{kha96} that there exist a smooth, positive definite function $V(\mathbf{x})$ and a continuous, positive definite function $W(\mathbf{x})$ such that
\begin{equation*}
\begin{split}
\frac{\partial V(\mathbf{x})}{\partial \mathbf{x}}\mathbf{f}\left(\mathbf{x},\mathbf{Y}(\mathbf{x})\right)\leq -W(\mathbf{x}), ~\forall~\mathbf{x}\in S
\end{split}
\end{equation*}
with the converse Lyapunov function given by
$$
V(\mathbf{x})=\int_{0}^{\infty}\alpha(\|\mathbf{\phi}(\mathbf{x},t)\|)dt, \quad V(\mathbf{0})=0
$$
according to Lemma $1$ in \cite{zhai19}. This completes the proof.

\subsection{The operator ${R}$} \label{operator}
For any matrix $\mathbf{D}\in R^{h\times h}$, the operator $R$ allows to move all elements of $\mathbf{D}$ up along its main diagonal by one slot. Mathematically, it is described as
$$
R(\mathbf{D})=\left[
\begin{array}{cc}
  \mathbf{0}_{h-1} & \mathbf{I}_{h-1} \\
  0 & \mathbf{0}^T_{h-1}
\end{array}
\right]\mathbf{D}\left[
\begin{array}{cc}
  \mathbf{0}^T_{h-1} & 0 \\
  \mathbf{I}_{h-1} & \mathbf{0}_{h-1}
\end{array}
\right]
$$
where $I_{h-1}$ denotes the $(h-1)$ dimensional unit matrix and $\mathbf{0}_{h-1}$ refers to the $(h-1)$ dimensional zero vector.

\subsection{Proof of Proposition \ref{prop_est}} \label{prop2}
For the sampling points in $W^N_h$ and the fixed $\mathbf{x}\in X$, it follows from (\ref{par_GP}) that $V(\mathbf{x})\sim N\left(\mu^N_h(\mathbf{x}),\sigma^N_h(\mathbf{x})\right)$, which leads to
$$
\frac{V(\mathbf{x})-\mu^N_h(\mathbf{x})}{\sigma^N_h(\mathbf{x})}\sim N(0,1).
$$
Thus, for a positive constant $c$, it holds that
\begin{equation*}
\begin{split}
\texttt{Prob}\left\{ \left|\frac{V(\mathbf{x})-\mu^N_h(\mathbf{x})}{\sigma^N_h(\mathbf{x})}\right|\leq c \right\}
&=\frac{1}{\sqrt{2\pi}}\int_{-c}^{c} e^{-\frac{\tau^2}{2}} d\tau \\
&=2\Phi(c)-1
\end{split}
\end{equation*}
where $\Phi$ denotes the cumulative distribution function (CDF) of the standard normal distribution. Since $\sigma^N_h(\mathbf{x})>0$, this indicates that
\begin{equation*}
\texttt{Prob}\left\{ \left|V(\mathbf{x})-\mu^N_h(\mathbf{x})\right|\leq c\cdot\sigma^N_h(\mathbf{x})\right\} =2\Phi(c)-1,
\end{equation*}
which is equivalent to
$$
\texttt{Prob}\left\{ \left|V(\mathbf{x})-\mu^N_h(\mathbf{x})\right|\leq \beta_{\delta}\cdot\sigma^N_h(\mathbf{x})\right\} =\delta
$$
with $\delta=2\Phi(c)-1$ and $\beta_{\delta}=\Phi^{-1}(\frac{1+\delta}{2})$. Therefore, $\mathbf{x}$ is in the level set $\{\mathbf{x}\in R^n|\mathbf{V}(\mathbf{x})\leq\hat{V}^N_{\max}\}$ with the probability of $\delta$ when it satisfies the inequality
$$
\mu^N_h(\mathbf{x})+\beta_{\delta}\cdot\sigma^N_h(\mathbf{x})\leq\hat{V}^N_{\max}
$$
with $\hat{V}^N_{\max}=\max_{\mathbf{x}^{(i)}\in W_h^N}\hat{V}(\mathbf{x}^{(i)})$. Considering that the level set is a compact subset of region of attraction $S$, the estimated region of attraction with the probability of $\delta$ is given by
$$
\Omega^N_{\delta}=\left\{\mathbf{x}\in R^n|\mu^N_h(\mathbf{x})+\beta_{\delta}\cdot\sigma^N_h(\mathbf{x})\leq\hat{V}^N_{\max}\right\}.
$$
This completes the proof.

\end{document}